\newtheorem{algorithm}{Algorithm}
\newcommand{\set}[1]{\{ #1 \}}
\newcommand{\eset}[1]{\{ \: #1 \: \}}
\newcommand{\Ii}{{\cal I}}
\newcommand{\Jj}{{\cal J}}
\newcommand{\prob}[1]{\mathbb{P}\{#1\}}
\newcommand{\Prob}[1]{\mathbb{P}\big\{#1\big\}}
\newcommand{\Expe}[1]{\mathbb{E}\big\{#1\big\}}
\newcommand{\expe}[1]{\mathbb{E}\{#1\}}
\newcommand{\supp}{\mathrm{supp}}
\title{Approximate well-supported Nash equilibria \\ 
  in symmetric bimatrix games%
  \thanks{Partially supported by the Centre for Discrete Mathematics
    and its Applications (DIMAP) and EPSRC grant EP/D063191/1.}}  
\author{Artur Czumaj \and Michail Fasoulakis \and Marcin Jurdzi\'nski}
\institute{Centre for Discrete Mathematics and its Applications (DIMAP)\\
  Department of Computer Science, University of Warwick, UK \\
  \{A.Czumaj, M.Fasoulakis, M.Jurdzinski\}@warwick.ac.uk}
\begin{document}

\maketitle
\thispagestyle{plain}
\begin{abstract}
The $\varepsilon$-well-supported Nash equilibrium is a strong notion 
of approximation of a Nash equilibrium, where no player has an
incentive greater than $\varepsilon$ to deviate from any of the pure
strategies that she uses in her mixed strategy.
The smallest constant $\varepsilon$ currently known for which there is
a polynomial-time algorithm that computes an
$\varepsilon$-well-supported Nash equilibrium in bimatrix games is
slightly below~$2/3$.  
In this paper we study this problem for 
\emph{symmetric} bimatrix games and we provide a polynomial-time
algorithm that gives a $(1/2+\delta)$-well-supported Nash equilibrium, 
for an arbitrarily small positive constant~$\delta$.   
\end{abstract}

\section{Introduction}

The problem of computing Nash equilibria is one of the most
fundamental problems in algorithmic game theory. 
It is now known that the complexity of computing a Nash equilibrium is
PPAD-complete~\cite{DGP09}, even for two-player games~\cite{CDT09}.  
Given this evidence of intractability of the problem, further research
has focused on the computation of \emph{approximate} Nash equilibria.  
In this context---and assuming that all payoffs are normalized to be
in the interval $[0,1]$---the standard notion of approximation is the 
additive approximation with a parameter $\varepsilon \in [0,1]$. 
There are two different notions of additive approximation of
Nash equilibria: the \emph{$\varepsilon$-Nash equilibrium} and the
\emph{$\varepsilon$-well-supported Nash equilibrium}. 

An $\varepsilon$-Nash equilibrium is a strategy profile---one strategy
for each player---in which no player can improve her payoff by more
than $\varepsilon$ through unilateral deviation from her strategy in
the strategy profile. 
Several polynomial-time algorithms have been proposed to find
$\varepsilon$-Nash equilibria for $\varepsilon=1/2$~\cite{DMP09}, for
$\varepsilon = (3-\sqrt{5})/2 \approx 0.38$~\cite{DMP07}, for
$\varepsilon = 1/2 - 1/(3\sqrt{6}) \approx 0.36$~\cite{BBM10}, and
finally for $\varepsilon \approx 0.3393$~\cite{TS08}.  
It is also known how to find $\varepsilon$-Nash equlibria in
quasi-polynomial time $n^{O(\log n/\varepsilon^2)}$ for arbitrarily
small $\varepsilon>0$~\cite{LMM03}, where $n$ is the number of pure  
strategies.  

The notion of an $\varepsilon$-well-supported Nash equilibrium
requires that no player has an incentive greater than~$\varepsilon$ to
deviate from any of the pure strategies she uses in her mixed
strategy. 
It is a notion stronger than that of an $\varepsilon$-Nash
equilibrium: 
every $\varepsilon$-well-supported Nash equilibrium is also an
$\varepsilon$-Nash equilibrium, but not necessarily vice-versa. 
The smallest $\varepsilon$ for which a polynomial-time algorithm is
currently known that computes an $\varepsilon$-well-supported Nash 
equilibrium in an arbitrary bimatrix game is slightly
above~$0.6619$~\cite{KS10,FGSS12}.  
It is also known that for the class of win-lose bimatrix games one
can find $1/2$-well-supported Nash equilibria in polynomial
time~\cite{KS10}.  

In this paper we study computation of approximate well-supported Nash 
equilibria in \emph{symmetric} bimatrix games, a class of bimatrix
games in which swapping the roles of the two players does not change
the payoff matrices, that is if the payoff matrix of one is the
transpose of the payoff matrix of the other. 
Symmetric games are an important class of games in game theory;
their applications include auctions and congestion games. 
They have already been studied by Nash in his seminal paper in which
he introduced the concept of a Nash equilibrium;
he proved that every symmetric game has at least one symmetric Nash
equilibrium, that is one in which all players use the same mixed 
strategy~\cite{N51}.  

Computing Nash equilibria in symmetric bimatrix games is known to
be as hard as computing Nash equilibria in arbitrary bimatrix games 
because there is a polynomial-time reduction from the latter to the 
former~\cite{GKT50}.  
In contrast to arbitrary bimatrix games, it is known how to compute 
$(1/3+\delta)$-Nash equilibria in symmetric bimatrix games  
in polynomial time, where $\delta > 0$ is arbitrarily
small~\cite{KS11}.  
In this paper we improve our understanding of the approximability of
Nash equilibria in symmetric bimatrix games by considering the task of
computing approximate well-supported Nash equilibria. 
Our main result is an algorithm that computes
$(1/2+\delta)$-well-supported Nash equilibria in symmetric bimatrix
games in polynomial time, where $\delta>0$ is arbitrarily small 
(Theorem~\ref{thm:main}). 

Our $(1/2+\delta)$-approximation algorithm splits the analysis into
two cases that are then considered independently. 
The first case is based on the following relaxation of the concept of
a symmetric Nash equilibrium: we say that a strategy profile $(x, x)$
\emph{prevents exceeding} $u \in [0,1]$ if the expected
payoff of every pure strategy in the symmetric game is at most $u$
when the other player uses strategy~$x$. 
This is indeed a relaxation of the concept of the symmetric Nash
equilibrium because if $(x^*,x^*)$ is a symmetric Nash equilibrium
then it prevents exceeding its value (that is, the expected payoff
each player gets when they both play strategy~$x^*$).   
We justify relevance of this concept by showing that a strategy
profile $(x, x)$ that prevents exceeding~$u$ is a $u$-well-supported
Nash equilibrium, so in order to provide a latter it is sufficient to
find a former.  
Moreover, we show that this relaxation of a symmetric Nash equilibrium
is algorithmically tractable because it suffices to solve a single
linear program to find a strategy profile $(x, x)$ that prevents
exceeding~$u$, if there is one. 
The first case in our algorithm is to solve this linear program for 
$u = 1/2$ and if it succeeds then we can immediately report a
$1/2$-well-supported Nash equilibrium.   
Note that by the above, if there is indeed a symmetric Nash
equilibrium with value~$1/2$ or smaller, then the linear
program does have a solution.  

If the first case in the algorithm fails to identify a
$1/2$-well-supported equilibrium because the game has no
symmetric Nash equilibrium with value $1/2$ or smaller, then
we consider the other, and technically more challenging, case. 
We use another relaxation of the concept of a symmetric Nash
equilibrium: we say that a strategy profile $(x, y)$
\emph{well supports} $u \in [0, 1]$ if the expected payoff of 
every pure strategy in the support of~$x$ is at least~$u$ when the
other player uses strategy~$y$, and the expected payoff of every pure
strategy in the support of~$y$ is at least~$u$ when the other player
uses strategy~$x$. 
We observe that if a strategy profile $(x, y)$ well supports~$u$ then
it is a~$(1-u)$-well-supported Nash equilibrium, so in order to
provide a latter it is sufficient to find a former.

Therefore, in order to obtain a
$(1/2+\delta)$-well-supported Nash equilibrium, we are interested in 
finding a strategy profile $(x, y)$ that well supports 
$u \geq 1/2-\delta$. 
While it may not be easy to verify if there is such a strategy
profile, let alone find one, both can be achieved in polynomial time
by solving a single linear program if we happen to know the supports
of strategies of each player in such a strategy profile.
The obvious technical obstacle to algorithmic tractability here is that
the number of all possible supports to consider is exponential in the
number of pure strategies.
We overcome this difficulty by proving the main technical result of
the paper (Theorem~\ref{thm:semi-strongly-achieve}) that for every
symmetric Nash equilibrium $(x^*, x^*)$ and for every $\delta>0$
establishes existence of a strategy profile $(x, y)$, with both
strategies having supports of constant size, that well supports
$u^*-\delta$, where $u^*$ is the value of the Nash equilibrium. 
Note that by the failure of the first case every symmetric Nash
equilibrium has value larger than $1/2$, and hence
Theorem~\ref{thm:semi-strongly-achieve} implies that there is such a
strategy profile with constant-size supports that well supports
$1/2-\delta$.  
The second case of our algorithm is to solve the linear programs
mentioned above for $u=1/2-\delta$ and for all supports~$I$ and~$J$ of
sizes at most $\kappa(\delta)$---where $\kappa(\delta)$ is a constant  
(which depends on~$\delta$, but does not depend on the number~$n$
of pure strategies)
that is specified in Theorem~\ref{thm:semi-strongly-achieve}---and to 
output a solution $(x, y)$ as soon as one is found.

In order to prove our main technical result
(Theorem~\ref{thm:semi-strongly-achieve}) we use the probabilistic
method to prove existence of constant-support strategy profiles that
nearly well support the expected payoffs of a Nash equilibrium.   
Our construction and proof are inspired by the construction of
Daskalakis et al.~\cite{DMP07} used by them to compute
$(3-\sqrt{5})/2$-Nash equilibria in bimatrix games in polynomial time,
but our analysis is different and more involved because we need to
guarantee the extra condition of nearly well supporting the
equilibrium values. 
The general idea of using sampling and Hoeffding bounds to prove
existence of approximate equlibria with small supports dates back to
the papers of Althofer~\cite{Alt94} and Lipton et al.~\cite{LMM03},
who have shown that strategies with supports of 
size~$O(\log n/\varepsilon^2)$ are sufficient for $\varepsilon$-Nash
equilibria in games with~$n$ strategies.

\section{Preliminaries}

We consider bimatrix games $(R, C)$, where 
$R, C \in [0, 1]^{n \times n}$ are square matrices of payoffs for the
two players: the row player and the column player, respectively.
If the row player uses a strategy $i$, $1 \leq i \leq n$ and 
if the column one uses a strategy $j$, $1 \leq j \leq n$, then 
the row player receives payoff~$R_{ij}$ and the column player receives 
payoff~$C_{ij}$. 
We assume that the payoff values are in the interval $[0, 1]$;
it is easy to see that equilibria in bimatrix games are invariant
under additive and positive multiplicative transformations of the
payoff matrices.  

A \emph{mixed strategy}~$x \in [0, 1]^n$ is a probability distribution
on the set of \emph{pure strategies} $\eset{1, 2, \dots, n}$. 
If the row player uses a mixed strategy~$x$ and the column player
uses a mixed strategy~$y$, then the row player receives payoff 
$x^T R y$ and the column player receives payoff $x^T C y$. 
A pair of strategies $(x, y)$, the former for the row player and the
latter for the column player, is often referred to as a strategy
profile. 
We define the \emph{support} $\supp(x)$ of a mixed strategy~$x$ to be
the set of pure strategies that have positive probability in~$x$,
i.e., 
$\supp(x) = \set{i \: : \: 1 \leq i \leq n \text{ and } x_i > 0}$.   

For every $i$, $1 \leq i \leq n$, let $R_{i\bullet}$ be the row vector
of the payoffs of the payoff matrix $R$ when the row player uses
the strategy~$i$.  
Note that if the row player uses a pure strategy~$i$,  
$1 \leq i \leq n$, and if the column player uses a mixed
strategy~$y$, then the row player receives payoff~$R_{i \bullet} y$. 
Similarly, for every $j$, $1 \leq j \leq n$, let $C_{\bullet j}$ be 
the column vector of the payoffs of the matrix $C$ when the column
player uses the strategy~$j$.  
Note that if the column player uses a pure strategy~$j$, 
$1 \leq j \leq n$, and if the row player uses a mixed strategy~$x$,
then the column player receives payoff~$x^T C_{\bullet j}$. 

\begin{definition}[Nash equilibrium]
  A $\emph{Nash equilibrium}$ is a strategy profile $(x^*,y^*)$ such
  that
  \begin{itemize}
  \item
    for every $i$, $1 \leq i \leq n$, we have 
    $R_{i \bullet} y^* \leq (x^*)^T R y^*$, and
  \item
    for every $j$, $1 \leq j \leq n$, we have
    $(x^*)^T C_{\bullet j} \leq (x^*)^T C y^*$,
  \end{itemize}
  or, in other words, 
  if $x^*$ is a best response to~$y^*$ and $y^*$ is a best
  response to~$x^*$. 
\end{definition}

\begin{definition}[Approximate Nash equilibrium]
  For every $\varepsilon > 0$, an 
  \emph{$\varepsilon$-Nash equilibrium} is a strategy profile
  $(x^*,y^*)$ such that
  \begin{itemize}
  \item
    for every $i$, $1 \leq i \leq n$, we have 
    $R_{i \bullet} y^* - (x^*)^T R y^* \leq \varepsilon$, and 
  \item
    for every $j$, $1 \leq j \leq n$, we have
    $(x^*)^T C_{\bullet j} - (x^*)^T C y^* \leq \varepsilon$, 
  \end{itemize}
  or, in other words, 
  if $x^*$ is an $\varepsilon$-best response to~$y^*$ and $y^*$ is an
  $\varepsilon$-best response to~$x^*$. 
\end{definition}

\begin{definition}[Approximate well-supported Nash equilibrium] 
  For every $\varepsilon > 0$, an 
  \emph{$\varepsilon$-well-supported Nash equilibrium} is a strategy
  profile $(x^*,y^*)$ such that
  \begin{itemize}
  \item
    for every $i$, $1 \leq i \leq n$, and $i' \in \supp(x^*)$, we
    have $R_{i \bullet} y^* - R_{i' \bullet} y^* \leq \varepsilon$, 
    and 
  \item
    for every $j$, $1 \leq j \leq n$, and $j' \in \supp(y^*)$, we have 
    $(x^*)^T C_{\bullet j} - (x^*)^T C_{\bullet j'} \leq \varepsilon$, 
  \end{itemize}
  or, in other words, 
  if every~$i' \in \supp(x^*)$ is an $\varepsilon$-best response
  to~$y^*$ and every~$j' \in \supp(y^*)$ is an $\varepsilon$-best
  response to~$x^*$. 
\end{definition}

\begin{definition}[Symmetric game, symmetric Nash equilibrium]
  A bimatrix game $(R, C)$ is \emph{symmetric} if $C = R^T$.  

  A \emph{symmetric Nash equilibrium} in a symmetric bimatrix game
  $(R, R^T)$ is a strategy profile $(x^*, x^*)$ such that
  for every $i$, $1 \leq i \leq n$, we have 
  $R_{i \bullet} x^* \leq (x^*)^T R x^*$.
  Note that then it also follows that for every $j$, 
  $1 \leq j \leq n$, we have:
  \[
  (x^*)^T R^T_{\bullet j} 
  = R_{j \bullet} x^*
  \leq (x^*)^T R x^*
  = (R x^*)^T x^*
  = (x^*)^T R^T x^*.
  \]
\end{definition}

Let us recall a fundamental theorem of Nash~\cite{N51} about existence
of symmetric Nash equilibria in symmetric bimatrix games.
\begin{theorem}[\cite{N51}]
  Every symmetric bimatrix game has a symmetric Nash equilibrium.  
\end{theorem}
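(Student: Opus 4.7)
My plan is to follow Nash's original approach via Brouwer's fixed-point theorem, specialized to the symmetric setting so that we automatically obtain a symmetric equilibrium rather than an arbitrary one.

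First I would set up the domain: let $\Delta = \set{x \in [0,1]^n : \sum_{i=1}^n x_i = 1}$ be the simplex of mixed strategies, which is nonempty, compact, and convex. The goal is to construct a continuous self-map $f : \Delta \to \Delta$ whose fixed points are exactly the symmetric Nash equilibria of $(R, R^T)$. Following Nash's construction, for each pure strategy $i$ and each $x \in \Delta$ I would define the ``gain'' function $\varphi_i(x) = \max\set{0, \, R_{i \bullet} x - x^T R x}$, which is continuous in $x$ and measures by how much pure strategy $i$ improves on the current symmetric expected payoff. I would then set
\[
f_i(x) \;=\; \frac{x_i + \varphi_i(x)}{1 + \sum_{j=1}^n \varphi_j(x)}.
\]
The denominator is at least $1$, so $f$ is well-defined and continuous, and a direct check shows $\sum_i f_i(x) = 1$ and $f_i(x) \geq 0$, so $f$ maps $\Delta$ into itself.

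Next I would invoke Brouwer's fixed-point theorem on the compact convex set $\Delta$ to obtain some $x^* \in \Delta$ with $f(x^*) = x^*$. The remaining task is to verify that $(x^*, x^*)$ is a symmetric Nash equilibrium, i.e.\ that $\varphi_i(x^*) = 0$ for all $i$. Writing $c = \sum_{j=1}^n \varphi_j(x^*)$, the fixed-point equation rearranges to $c \cdot x_i^* = \varphi_i(x^*)$ for every $i$. The crux is to show $c = 0$: since $(x^*)^T R x^* = \sum_{i \in \supp(x^*)} x_i^* \, R_{i \bullet} x^*$ is a convex combination of the values $R_{i \bullet} x^*$ for $i \in \supp(x^*)$, at least one such $i^*$ must satisfy $R_{i^* \bullet} x^* \leq (x^*)^T R x^*$, hence $\varphi_{i^*}(x^*) = 0$. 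Combined with $c \cdot x_{i^*}^* = \varphi_{i^*}(x^*) = 0$ and $x_{i^*}^* > 0$, this forces $c = 0$, and therefore $\varphi_i(x^*) = 0$ for every $i$, which is exactly the defining condition $R_{i \bullet} x^* \leq (x^*)^T R x^*$ of a symmetric Nash equilibrium.

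The only nonroutine step is the last one; the rest is bookkeeping about continuity and the simplex. I expect the main obstacle, if treating this fully rigorously, to be the argument that $c$ must vanish at the fixed point, because the naive attempt of summing $c \cdot x_i^* = \varphi_i(x^*)$ over all $i$ yields only the tautology $c = c$. The correct move is to pick a pure strategy in the support of $x^*$ whose payoff against $x^*$ does not exceed the symmetric payoff $(x^*)^T R x^*$; such a strategy exists by an averaging argument and witnesses $\varphi_{i^*}(x^*) = 0$, which together with positivity of $x_{i^*}^*$ collapses the sum~$c$ to zero. Symmetry of the game plays no role beyond ensuring $C = R^T$, so the bound on the column player's deviations, as shown in the definition above, follows automatically.
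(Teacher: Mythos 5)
Your proof is correct, and it is precisely the argument Nash gives in the cited source~\cite{N51}: the paper itself states the theorem by reference and offers no proof, so there is nothing in the paper to compare against beyond the citation. You have correctly specialized Nash's Brouwer fixed-point construction to the symmetric setting by applying the gain map $f$ on a single simplex $\Delta$ rather than on a product of simplices, which forces the resulting fixed point to be a profile of the form $(x^*,x^*)$. The one step that needs care---ruling out $c>0$ at the fixed point---is handled correctly by the averaging argument: since $(x^*)^T R x^*$ is the $x^*$-weighted average of the row payoffs $R_{i\bullet}x^*$, some $i^*\in\supp(x^*)$ must have $R_{i^*\bullet}x^*\le (x^*)^T R x^*$, giving $\varphi_{i^*}(x^*)=0$ while $x^*_{i^*}>0$, hence $c=0$. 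Your closing remark that the column-player condition is automatic matches the observation in the paper's definition of symmetric Nash equilibrium, where $(x^*)^T R^T_{\bullet j} = R_{j\bullet}x^* \le (x^*)^T R x^* = (x^*)^T R^T x^*$.
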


\section{Computing approximate well-supported Nash equilibria} 

Fix a bimatrix game $G = (R, C)$ for the rest of the paper, where 
$R, C \in [0, 1]^{n \times n}$.  
We will use~$N$ to denote the number of bits needed to represent the
matrices~$R$ and~$C$ with all their entries represented in binary.  
We say that a strategy~$x$ is \emph{$k$-uniform}, 
for $k \in \mathbb{N} \setminus \eset{0}$, if  
$x_i \in \{0, \frac{1}{k},\frac{2}{k}, \dots, 1\}$, for every~$i$, 
$1 \leq i \leq n$. 

\subsection{Strategies that prevent exceeding a payoff} 

\begin{definition}[Preventing exceeding payoffs]
  We say that a strategy $x \in [0, 1]^n$ for the row player
  \emph{prevents exceeding} $u \in [0, 1]$ if for every  
  $j = 1, 2, \dots, n$, we have $x^T C_{\bullet j} \leq u$ or, in other
  words, if the column player payoff of the best response to~$x$
  does not exceed~$u$.
  Similarly, we say that a strategy $y \in [0, 1]^n$ for the column
  player \emph{prevents exceeding} $v \in [0, 1]$ if for every   
  $i = 1, 2, \dots, n$, we have $R_{i \bullet} y \leq v$ or, in other 
  words, if the row player payoff of the best response to~$y$ 
  does not exceed~$v$.

  For brevity, we say that a strategy profile $(x, y)$ 
  \emph{prevents exceeding} $(v, u)$ if $x$ prevents exceeding~$u$ and
  $y$ prevents exceeding~$v$. 
\end{definition}

Observe that the following system of linear constraints
$\mathrm{PE}(v, u)$ characterizes strategy profiles $(x, y)$ that
prevent exceeding $(v, u) \in [0, 1]^2$: 
\begin{eqnarray*}
  \sum_{i=1}^n x_i = 1; & & 
  x_i \geq 0 \text{ for all } i = 1, 2, \dots, n; \\
  \sum_{j=1}^n y_j = 1; & & 
  y_j \geq 0 \text{ for all } j = 1, 2, \dots, n; \\
  R_{i \bullet} y \leq v & & \text{for all } i = 1, 2, \dots, n; \\ 
  x^T C_{\bullet j} \leq u & & \text{for all } j = 1, 2, \dots, n. 
\end{eqnarray*}
Note that if $(x, y)$ is a Nash equilibrium then, by definition, it
prevents exceeding $(x^T R y, x^T C y)$, which implies the following
Proposition. 

\begin{proposition} 
  \label{prop:PE-solution}
  If $(x, y)$ is a Nash equilibrium, $v \geq x^T R y$, and 
  $u \geq x^T C y$, then $\mathrm{PE}(v, u)$ has a solution and it 
  prevents exceeding $(v, u)$. 
\end{proposition}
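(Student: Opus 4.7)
The plan is to exhibit the Nash equilibrium $(x, y)$ itself as a witness solution to $\mathrm{PE}(v, u)$ and verify that each of the four groups of constraints holds. The probability-distribution constraints (that the $x_i$ and $y_j$ are nonnegative and sum to~$1$) are immediate from the assumption that $(x, y)$ is a strategy profile, so they require no work.

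Next I would handle the row-player constraints $R_{i \bullet} y \leq v$ for all $i$. The Nash equilibrium condition for the row player states precisely that $R_{i \bullet} y \leq x^T R y$ for every pure strategy~$i$, and by hypothesis $v \geq x^T R y$, so chaining these two inequalities gives the desired bound. The column-player constraints $x^T C_{\bullet j} \leq u$ are handled symmetrically: the Nash condition for the column player gives $x^T C_{\bullet j} \leq x^T C y$ for every~$j$, and we are assuming $u \geq x^T C y$.

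Having verified all the constraints, $(x, y)$ is a feasible solution to $\mathrm{PE}(v, u)$, which proves both that the system has a solution and (by definition of the system) that this solution prevents exceeding $(v, u)$. There is no real obstacle in this proof; it is a direct unpacking of the definition of a Nash equilibrium against the definition of $\mathrm{PE}(v, u)$, with the hypotheses on $v$ and $u$ serving only to upgrade the equilibrium inequalities $R_{i \bullet} y \leq x^T R y$ and $x^T C_{\bullet j} \leq x^T C y$ into the stated bounds by~$v$ and~$u$.
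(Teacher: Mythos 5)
Your proof is correct and takes essentially the same approach as the paper: the paper observes that a Nash equilibrium $(x,y)$ prevents exceeding $(x^T R y, x^T C y)$ by definition, and the proposition follows by weakening to $(v,u)$; you simply spell out the verification of each constraint group in $\mathrm{PE}(v,u)$ explicitly.
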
 

By the following proposition, in order to find an
$\varepsilon$-well-supported Nash equilibrium it suffices to find a 
strategy profile that prevents exceeding 
$(\varepsilon, \varepsilon)$. 

\begin{proposition}
  \label{prop:wsNe-from-pe}
  If a strategy profile $(x, y)$ prevents exceeding $(v, u)$ then it
  is a $\max(v, u)$-well-supported Nash equilibrium. 
\end{proposition}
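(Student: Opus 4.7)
The plan is to unpack both definitions and observe that preventing exceeding already caps the largest quantity on the left-hand side of the well-supported inequalities, while nonnegativity of payoffs takes care of the subtracted term.

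First, I would fix an arbitrary $i \in \{1,\dots,n\}$ and an arbitrary $i' \in \supp(x)$. The goal is to bound $R_{i\bullet} y - R_{i'\bullet} y$. Since $y$ prevents exceeding $v$, we have $R_{i\bullet} y \le v$. Since all entries of $R$ lie in $[0,1]$ and $y$ is a probability distribution, we have $R_{i'\bullet} y \ge 0$. Combining these two facts gives
\[
R_{i\bullet} y - R_{i'\bullet} y \;\le\; v - 0 \;=\; v \;\le\; \max(v,u).
\]
Notice that the hypothesis $i' \in \supp(x)$ is not even used; the conclusion follows uniformly for all pure strategies~$i'$.

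Next, I would repeat the symmetric argument on the column player's side. For any $j \in \{1,\dots,n\}$ and any $j' \in \supp(y)$, the hypothesis that $x$ prevents exceeding~$u$ gives $x^T C_{\bullet j} \le u$, and nonnegativity of the entries of~$C$ gives $x^T C_{\bullet j'} \ge 0$, whence
\[
x^T C_{\bullet j} - x^T C_{\bullet j'} \;\le\; u \;\le\; \max(v,u).
\]
Together the two inequalities match the definition of an $\varepsilon$-well-supported Nash equilibrium with $\varepsilon = \max(v,u)$, completing the proof.

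There is essentially no technical obstacle here: the proof is a direct verification that relies only on (i) the definition of preventing exceeding, (ii) the definition of a well-supported equilibrium, and (iii) the standing normalization assumption that payoffs lie in $[0,1]$, which supplies the needed lower bound of $0$ on $R_{i'\bullet}y$ and $x^T C_{\bullet j'}$. The only thing worth flagging explicitly in the write-up is where this nonnegativity assumption is invoked, since it is what allows us to discard the subtracted term without paying anything extra.
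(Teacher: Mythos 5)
Your proof is correct and follows exactly the same route as the paper's: bound $R_{i\bullet}y$ above by $v$ via the preventing-exceeding hypothesis, bound $R_{i'\bullet}y$ below by $0$ via payoff nonnegativity, and argue symmetrically on the column side. The remark that the support hypothesis on $i'$ is not actually needed is a small but accurate observation not spelled out in the paper.
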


\begin{proof}
  Let $i' \in \supp(x)$ and let $i \in \eset{1, 2, \dots, n}$. 
  Then we have:
  \[
  R_{i \bullet} y - R_{i' \bullet} y 
  \leq R_{i \bullet} y
  \leq v,
  \]
  where the first inequality follows from $R_{i' \bullet} y \geq 0$,
  and the other one holds because $y$ prevents exceeding~$v$. 
  Similarly, and using the assumption that $x$ prevents exceeding~$u$,
  we can argue that for all $j' \in \supp(y)$ and 
  $j \in \eset{1, 2, \dots, n}$, we have 
  $x^T C_{\bullet j} - x^T C_{\bullet j'} \leq u$. 
  It follows that $(x, y)$ is a $\max(v, u)$-well-supported Nash
  equilibrium. 
  \qed
\end{proof}

\subsection{Strategies that well support a payoff}

\begin{definition}[Well supporting payoffs]
  We say that a strategy $x \in [0, 1]^n$ for the row player
  \emph{well supports} $v \in [0, 1]$ against a strategy 
  $y \in [0, 1]^n$ for the column player if for every 
  $i \in \supp(x)$, we have $R_{i \bullet} y \geq v$. 
  Similarly, we say that a strategy $y \in [0, 1]^n$ for the column 
  player \emph{well supports} $u \in [0, 1]$ against a strategy 
  $x \in [0, 1]^n$ for the row player if for every $j \in \supp(y)$,
  we have $x^T C_{\bullet j} \geq u$.  

  For brevity, we say that a strategy profile $(x, y)$ 
  \emph{well supports} $(v, u)$ if $x$ well supports~$v$ against~$y$ and
  $y$ well supports~$u$ against~$x$. 
\end{definition}

The following theorem states that the payoffs of every Nash
equilibrium can be nearly well supported by a strategy profile with
supports of constant size.  
\begin{theorem}
  \label{thm:semi-strongly-achieve}
  Let $(x^*, y^*)$ be a Nash equilibrium. 
  For every $\delta > 0$, there are $\kappa(\delta)$-uniform
  strategies $x, y$ such that the strategy profile $(x, y)$
  well supports 
  $\big((x^*)^T R y^* - \delta, (x^*)^T C y^* - \delta\big)$,  
  where $\kappa(\delta) = \lceil 2\ln(1/\delta)/\delta^2 \rceil$.
\end{theorem}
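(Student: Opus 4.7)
The plan is to apply the probabilistic method: sample a constant number of pure strategies from $x^*$ and $y^*$ independently, and show via Hoeffding's inequality that the empirical distributions almost always well support the equilibrium payoffs to within~$\delta$.

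Set $k = \kappa(\delta) = \lceil 2\ln(1/\delta)/\delta^2 \rceil$ and independently draw i.i.d.\ samples $i_1,\dots,i_k$ from $x^*$ and $j_1,\dots,j_k$ from $y^*$. Let $\tilde x,\tilde y$ be the corresponding empirical distributions; both are $k$-uniform by construction, with $\supp(\tilde x)\subseteq\supp(x^*)$ and $\supp(\tilde y)\subseteq\supp(y^*)$. The starting observation is the complementary-slackness consequence of the Nash equilibrium condition: every $i\in\supp(x^*)$ satisfies $R_{i\bullet}y^* = (x^*)^T R y^*$, and symmetrically every $j\in\supp(y^*)$ satisfies $(x^*)^T C_{\bullet j} = (x^*)^T C y^*$.

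For any fixed $i\in\supp(x^*)$, the quantity $R_{i\bullet}\tilde y = \frac{1}{k}\sum_{\ell} R_{i,j_\ell}$ is an average of $k$ i.i.d.\ $[0,1]$-valued random variables with expectation $(x^*)^T R y^*$, so Hoeffding's inequality gives
\[
  \Prob{R_{i\bullet}\tilde y < (x^*)^T R y^* - \delta} \;\leq\; e^{-2k\delta^2} \;\leq\; \delta^4
\]
by the choice of $k$. Since each $i_\ell$ lies in $\supp(x^*)$ and is independent of $\tilde y$, conditioning on $i_\ell$ preserves this bound, and a union bound over $\ell=1,\dots,k$ covers every element of $\supp(\tilde x)$. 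The symmetric argument handles the column side, so the total failure probability is at most $2k\,e^{-2k\delta^2}$, which is strictly less than~$1$ for every $\delta\in(0,1)$ by the choice of~$k$. By the probabilistic method the required $k$-uniform profile $(\tilde x,\tilde y)$ exists.

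The main obstacle I expect is that $\supp(\tilde x)$ is itself a random subset of $\supp(x^*)$, and a naive union bound over all of $\supp(x^*)$ would force $k = \Omega(\log n / \delta^2)$ rather than a constant. The right move is to union-bound over the (at most~$k$) sample indices $\ell$, leveraging independence of the row and column samples to reduce the event for each~$\ell$ to a fixed-$i$ Hoeffding estimate via conditioning on $i_\ell$. This keeps the union bound of size~$k$ and, combined with the complementary-slackness identity $R_{i\bullet}y^* = (x^*)^T R y^*$ on $\supp(x^*)$, is what allows the deviation from the equilibrium value (rather than merely from $R_{i\bullet}y^*$) to be controlled and yields the constant-support strategies required by the theorem.
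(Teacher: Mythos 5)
Your proposal is correct and takes essentially the same approach as the paper: sample $\kappa(\delta)$-uniform empirical strategies, use the complementary-slackness identity $R_{i\bullet}y^*=(x^*)^TRy^*$ for $i\in\supp(x^*)$ so that Hoeffding controls the deviation from the equilibrium value, and exploit independence of the row and column samples to keep the union bound of constant size. The only difference is presentational: you union-bound directly over the $k$ sample indices $\ell$ (conditioning on $i_\ell$), whereas the paper sums over candidate supports $I\subseteq\supp(x^*)$ of size at most $\kappa(\delta)$ and conditions on $I=\supp(X)$; both reduce to the same fixed-$i$ Hoeffding estimate via independence of $X$ and $Y$.
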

The proof of this technical result is postponed until
Section~\ref{section:proof-of-thm}.

Let $v, u \in [0, 1]$, $\delta > 0$, and let~$\Ii$ and~$\Jj$ be
multisets of pure strategies of size~$\kappa(\delta)$.
Consider the following system $\mathrm{WS}(v, u, \Ii, \Jj, \delta)$ of
linear constraints: 
\begin{eqnarray*}
  x_i = k_i/\kappa(\delta) & & \text{for all } i = 1, 2, \dots, n; \\
  y_j = \ell_j/\kappa(\delta) & & \text{for all } j = 1, 2, \dots, n; \\
  R_{i \bullet} y \geq v - \delta & & \text{for all } i \in \Ii; \\
  x^T C_{\bullet j} \geq u - \delta & &
    \text{for all } j \in \Jj;
\end{eqnarray*}
where $k_i$ is the number of times~$i$ occurs in
multiset~$\Ii$, and $\ell_j$ is the number of times~$j$ occurs in
multiset~$\Jj$.
Note that the system $\mathrm{WS}(v, u, \Ii, \Jj, \delta)$ of linear
constraints characterizes $\kappa(\delta)$-uniform strategy profiles 
$(x, y)$, such that $\supp(x) = \Ii$ and $\supp(y) = \Jj$, that well 
support $(v-\delta, u-\delta)$. 
Theorem~\ref{thm:semi-strongly-achieve} implies the following.

\begin{corollary}
  \label{cor:WS-solution}
  If $(x, y)$ is a Nash equilibrium, $v \leq x^T R y$, 
  $u \leq x^T C y$, and $\delta > 0$, then there are multisets $\Ii$
  and $\Jj$ from $\eset{1, 2, \dots, n}$ of size~$\kappa(\delta)$, such
  that $\mathrm{WS}(v, u, \Ii, \Jj, \delta)$ has a solution and it
  well supports $(v-\delta, u-\delta)$. 
\end{corollary}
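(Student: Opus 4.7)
The proposal is to derive this as an essentially immediate consequence of Theorem~\ref{thm:semi-strongly-achieve}, with only a small translation between the language of $\kappa(\delta)$-uniform strategies and the multiset-indexed linear system $\mathrm{WS}$.

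First I would invoke Theorem~\ref{thm:semi-strongly-achieve} on the given Nash equilibrium $(x, y)$, playing the role of $(x^*, y^*)$ there. This yields $\kappa(\delta)$-uniform strategies $\tilde{x}, \tilde{y}$ such that for every $i \in \supp(\tilde{x})$ we have $R_{i\bullet} \tilde{y} \geq x^T R y - \delta$ and for every $j \in \supp(\tilde{y})$ we have $\tilde{x}^T C_{\bullet j} \geq x^T C y - \delta$. Using the hypotheses $v \leq x^T R y$ and $u \leq x^T C y$, these lower bounds strengthen to $R_{i\bullet} \tilde{y} \geq v - \delta$ and $\tilde{x}^T C_{\bullet j} \geq u - \delta$, respectively, so $(\tilde{x}, \tilde{y})$ well supports $(v - \delta, u - \delta)$.

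Next I would translate $\tilde{x}, \tilde{y}$ into multisets. Since $\tilde{x}$ is $\kappa(\delta)$-uniform and sums to $1$, each $\tilde{x}_i$ equals $k_i/\kappa(\delta)$ for a unique nonnegative integer $k_i$, with $\sum_i k_i = \kappa(\delta)$. Define $\Ii$ to be the multiset in which pure strategy $i$ appears with multiplicity $k_i$, and analogously define $\Jj$ from $\tilde{y}$. Then $\Ii$ and $\Jj$ are multisets from $\eset{1, 2, \dots, n}$ of size exactly $\kappa(\delta)$, and $\supp(\tilde{x})$ coincides with the underlying set of $\Ii$, and likewise for $\tilde{y}$ and $\Jj$.

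Finally I would verify that $(\tilde{x}, \tilde{y})$ satisfies $\mathrm{WS}(v, u, \Ii, \Jj, \delta)$. The first two families of constraints, $x_i = k_i/\kappa(\delta)$ and $y_j = \ell_j/\kappa(\delta)$, hold by construction of $\Ii$ and $\Jj$. The well-support constraints, $R_{i\bullet} y \geq v - \delta$ for $i \in \Ii$ and $x^T C_{\bullet j} \geq u - \delta$ for $j \in \Jj$, are precisely the inequalities established in the first step, applied to every index appearing in the respective multiset. There is no real obstacle here: everything is definitional once Theorem~\ref{thm:semi-strongly-achieve} is in hand, so the only care needed is to check that the bookkeeping between $\kappa(\delta)$-uniform distributions and their multiset encodings is clean, and that the relaxation from $x^T R y$ to the weaker bound $v$ (and from $x^T C y$ to $u$) preserves the well-support property.
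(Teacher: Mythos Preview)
Your proposal is correct and matches the paper's approach exactly: the paper states this corollary with no separate proof, simply writing that Theorem~\ref{thm:semi-strongly-achieve} implies it. Your explicit bookkeeping between $\kappa(\delta)$-uniform strategies and their multiset encodings, together with the monotonicity in $v$ and $u$, is precisely the routine verification the paper leaves to the reader.
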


By the following proposition, in order to find an
$\varepsilon$-well-supported Nash equilibrium it suffices to find a 
strategy profile that well supports $(1-\varepsilon, 1-\varepsilon)$. 

\begin{proposition}
  \label{prop:wsNe-from-ws}
  If a strategy profile $(x, y)$ well supports $(v, u)$ then it
  is a $\big(1-\min(v, u)\big)$-well-supported Nash equilibrium.  
\end{proposition}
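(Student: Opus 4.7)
The plan is to mirror the short argument used in the proof of Proposition~\ref{prop:wsNe-from-pe}, but now exploiting the upper bound of~$1$ on every expected payoff (since all entries of $R$ and $C$ are in $[0,1]$ and $x,y$ are probability distributions), rather than a lower bound of~$0$.

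First I would handle the row player. Fix an arbitrary $i \in \{1,2,\dots,n\}$ and an arbitrary $i' \in \supp(x)$. On the one hand, because every entry of $R$ lies in $[0,1]$ and $y$ is a probability distribution, $R_{i \bullet} y \leq 1$. On the other hand, since $x$ well supports $v$ against $y$ and $i' \in \supp(x)$, we have $R_{i' \bullet} y \geq v$. Combining these two inequalities,
\[
R_{i \bullet} y - R_{i' \bullet} y \;\leq\; 1 - v \;\leq\; 1 - \min(v, u).
\]

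Next I would repeat the symmetric argument for the column player. Fix an arbitrary $j \in \{1,2,\dots,n\}$ and an arbitrary $j' \in \supp(y)$. Since every entry of $C$ lies in $[0,1]$ and $x$ is a probability distribution, $x^T C_{\bullet j} \leq 1$; and since $y$ well supports $u$ against $x$ and $j' \in \supp(y)$, we have $x^T C_{\bullet j'} \geq u$. Therefore
\[
x^T C_{\bullet j} - x^T C_{\bullet j'} \;\leq\; 1 - u \;\leq\; 1 - \min(v, u).
\]
These two bounds together match the definition of a $(1-\min(v,u))$-well-supported Nash equilibrium, completing the proof. There is no real obstacle here: the result is essentially a one-line complementary dual of Proposition~\ref{prop:wsNe-from-pe}, with the payoff range $[0,1]$ doing all the work.
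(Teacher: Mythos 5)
Your proof is correct and uses the same argument as the paper: bound the candidate pure strategy's payoff above by $1$, bound the supported pure strategy's payoff below by $v$ (resp.\ $u$), and subtract. The only cosmetic difference is that you state the two inequalities separately before combining them, whereas the paper chains them in a single display; the substance is identical.
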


\begin{proof}
  Let $i' \in \supp(x)$ and let $i \in \eset{1, 2, \dots, n}$. 
  Then we have:
  \[
  R_{i \bullet} y - R_{i' \bullet} y 
  \leq 1 - R_{i' \bullet} y
  \leq 1 - v,
  \]
  where the first inequality follows from $R_{i \bullet} y \leq 1$,
  and the other one holds because $y$ well supports~$v$. 
  Similarly, and using the assumption that $x$ well supports~$u$, 
  we can argue that for all $j' \in \supp(y)$ and 
  $j \in \eset{1, 2, \dots, n}$, we have 
  $x^T C_{\bullet j} - x^T C_{\bullet j'} \leq 1-u$. 
  It follows that $(x, y)$ is a 
  $\big(1-\min(v, u)\big)$-well-supported Nash equilibrium. 
  \qed
\end{proof}

\subsection{The algorithm for symmetric games}

Propositions~\ref{prop:wsNe-from-pe} and~\ref{prop:wsNe-from-ws}
suggest that in order to identify a $1/2$-well-supported Nash 
equilibrium it suffices to find either a strategy profile that
prevents exceeding $(1/2, 1/2)$ or one that well supports 
$(1/2, 1/2)$.  
Moreover, verifying existence and identifying such strategy profiles
can be done efficiently by solving the linear program
$\mathrm{PE}(1/2, 1/2)$, and by solving linear programs 
$\mathrm{WS}(1/2+\delta, 1/2+\delta, \Ii, \Jj, \delta)$ for all
multisets $\Ii$ and $\Jj$ of pure strategies of size~$\kappa(\delta)$,
respectively.  

For arbitrary bimatrix games the above scheme may fail if none of
these systems of linear constraints has a solution.
Note, however, that---by Proposition~\ref{prop:PE-solution} and
Corollary~\ref{cor:WS-solution}---it would indeed succeed if we could
guarantee that the game had a Nash equilibrium with both payoffs at
most~$1/2$, or with both payoffs at least~$(1/2+\delta)$. 
Symmetric bimatrix games nearly satisfy this requirement thanks to
existence of symmetric Nash equilibria in every symmetric
game~\cite{N51}.

If $(x^*, x^*)$ is a symmetric Nash equilibrium in a symmetric
bimatrix game $(R, R^T)$ then---trivially---either 
$(x^*)^T R x^* \leq 1/2$ or $(x^*)^T R x^* > 1/2$. 
In the former case, by Proposition~\ref{prop:PE-solution} the linear
program $\mathrm{PE}(1/2, 1/2)$ has a solution, and by
Proposition~\ref{prop:wsNe-from-pe} it is a $(1/2)$-well-supported
Nash equilibrium. 
In the latter case, by Corollary~\ref{cor:WS-solution} there are
multisets~$\Ii$ and~$\Jj$ of pure strategies of size~$\kappa(\delta)$, such
that $\mathrm{WS}(1/2, 1/2, \Ii, \Jj, \delta)$ has a solution $(x, y)$
and it well supports $(1/2-\delta, 1/2-\delta)$.  
It then follows by Proposition~\ref{prop:wsNe-from-ws} that $(x, y)$
is a $(1/2+\delta)$-well-supported Nash equilibrium.

\begin{algorithm}
  \label{alg:half-approx}
  Let $(R, R^T)$ be a symmetric game and let $\delta > 0$.
  \begin{enumerate}
  \item
    \label{step:sce}
    If $\mathrm{PE}(1/2, 1/2)$ has a solution~$x$ then return 
    $(x, x)$.
  \item
    \label{step:ssa}
    Otherwise, that is if $\mathit{PE}(1/2, 1/2)$ does not have a 
    solution:
    \begin{enumerate}
    \item
      \label{step:guess}
      Using exhaustive search, find
      multisets~$\Ii$ and $\Jj$ of pure strategies,
      both of size~$\kappa(\delta)$,
      such that $\mathit{WS}(1/2, 1/2, \Ii, \Jj, \delta)$ has 
      a solution.
    \item
      \label{step:return}
      Return a solution $(x, y)$ of
      $\mathit{WS}(1/2, 1/2, \Ii, \Jj, \delta)$. 
      \qed
    \end{enumerate}
  \end{enumerate}
\end{algorithm}

In order to find appropriate $\Ii$ and $\Jj$ in step~2(a), 
an exhaustive enumeration of all pairs of multisets~$\Ii$ and~$\Jj$
of size~$\kappa(\delta)$ is done, and for each such pair the system 
of linear constraints $\mathit{WS}(1/2, 1/2, \Ii, \Jj, \delta)$ is
solved. 
Note that the number of $\kappa(\delta)$-element multisets from an
$n$-element set is
\[
{n+\kappa(\delta)-1 \choose \kappa(\delta)} =
n^{O(\kappa(\delta))} = n^{O(\ln(1/\delta)/\delta^2)}.
\]
Therefore, step~2.\ of the algorithm requires
solving $n^{O(\ln(1/\delta)/\delta^2)}$ linear programs
and hence the algorithm runs in time $N^{O(\ln(1/\delta)/\delta^2)}$. 

\begin{theorem}
  \label{thm:main}
  For every $\delta > 0$, Algorithm~\ref{alg:half-approx} runs in time 
  $N^{O(\ln(1/\delta)/\delta^2)}$ and it returns a strategy profile that
  is a $(1/2+\delta)$-well-supported Nash equilibrium. 
\end{theorem}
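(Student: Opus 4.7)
The plan is to verify three things in turn: that Algorithm~\ref{alg:half-approx} always terminates with an output, that the returned strategy profile is a $(1/2+\delta)$-well-supported Nash equilibrium, and that the total running time is $N^{O(\ln(1/\delta)/\delta^2)}$. The argument splits cleanly along the case distinction of the algorithm and largely amounts to assembling the propositions, corollary, and Nash's theorem already at hand.

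For case~1, suppose $\mathrm{PE}(1/2, 1/2)$ is feasible. Using symmetry $C = R^T$ I would first observe that the system decouples: since $x^T C_{\bullet j} = R_{j \bullet} x$, both families of constraints reduce to ``$R_{i \bullet} z \leq 1/2$ for all $i$'' applied to $z = y$ and $z = x$, respectively. Consequently from any feasible $x$ one obtains a solution $(x, x)$ that prevents exceeding $(1/2, 1/2)$, and Proposition~\ref{prop:wsNe-from-pe} immediately yields a $(1/2)$-well-supported, hence $(1/2+\delta)$-well-supported, Nash equilibrium.

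For case~2, I would invoke Nash's theorem to fix a symmetric Nash equilibrium $(x^*, x^*)$ with value $u^* = (x^*)^T R x^*$. If $u^* \leq 1/2$, then Proposition~\ref{prop:PE-solution} applied with $v = u = 1/2 \geq u^*$ forces $\mathrm{PE}(1/2, 1/2)$ to be feasible, contradicting the fact that step~1 failed. Hence $u^* > 1/2$. Applying Corollary~\ref{cor:WS-solution} to $(x^*, x^*)$ with $v = u = 1/2 \leq u^*$ then delivers multisets $\Ii, \Jj$ of size $\kappa(\delta)$ for which $\mathrm{WS}(1/2, 1/2, \Ii, \Jj, \delta)$ admits a solution $(x, y)$ that well supports $(1/2-\delta, 1/2-\delta)$. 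The exhaustive enumeration in step~2(a) therefore cannot fail to find at least one suitable pair, so step~2(b) returns some such $(x, y)$, and Proposition~\ref{prop:wsNe-from-ws} certifies that it is a $\big(1 - (1/2-\delta)\big) = (1/2+\delta)$-well-supported Nash equilibrium.

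For the running time, step~1 solves a single linear program of size polynomial in $N$. Step~2(a) enumerates pairs of $\kappa(\delta)$-element multisets over $\{1, \dots, n\}$, of which there are $\binom{n+\kappa(\delta)-1}{\kappa(\delta)}^2 = n^{O(\kappa(\delta))}$, and solves a polynomial-size LP for each. Substituting $\kappa(\delta) = \lceil 2\ln(1/\delta)/\delta^2 \rceil$ and absorbing the per-LP cost into the exponent of $N$ produces the claimed $N^{O(\ln(1/\delta)/\delta^2)}$ bound. The only nontrivial ingredient is the appeal to Theorem~\ref{thm:semi-strongly-achieve} (through Corollary~\ref{cor:WS-solution}) to certify that step~2(a) cannot fail; once that and Nash's theorem are in hand, the rest is routine case analysis and there is no genuine obstacle.
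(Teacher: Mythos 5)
Your proposal is correct and follows essentially the same route as the paper: Nash's theorem, the dichotomy on whether $\mathrm{PE}(1/2,1/2)$ is feasible, Proposition~\ref{prop:PE-solution} and Corollary~\ref{cor:WS-solution} to certify that one of the two phases must succeed, Propositions~\ref{prop:wsNe-from-pe} and~\ref{prop:wsNe-from-ws} to convert solutions into approximate well-supported equilibria, and the multiset count for the running time. Your case~1 is slightly more careful than the paper's: you explicitly note that under $C = R^T$ the two families of $\mathrm{PE}$ constraints coincide (via $x^T C_{\bullet j} = R_{j\bullet} x$), so a feasible $x$ automatically yields a symmetric solution $(x,x)$; the paper tacitly assumes this when it says ``return $(x,x)$.''
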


\section{Proof of Theorem~\ref{thm:semi-strongly-achieve}}
\label{section:proof-of-thm}

We use the probabilistic method:
random $\kappa(\delta)$-uniform strategies are drawn by
sampling $\kappa(\delta)$ pure strategies (with replacement) from the 
distributions~$x^*$ and~$y^*$, respectively, and Hoeffding's
inequality is used to show that the probability of thus selecting a
strategy profile that  well supports 
$\big(v^* - \delta, u^* - \delta\big)$ is positive if
$\kappa(\delta) \geq 2 \ln(1/\delta)/\delta^2$, where 
$v^* = (x^*)^T R y^*$ and $u^* = (x^*)^T C y^*$. 

Consider $2\kappa(\delta)$ mutually independent random
variables~$I_t$ and~$J_t$, $1 \leq t \leq \kappa(\delta)$,
with values in $\eset{1, 2, \dots, n}$, the former with the same
distribution as strategy~$x^*$ and the latter with the same
distribution as strategy~$y^*$, that is we have
$\prob{I_t=i} = x^*_i$ and $\Prob{J_t=j} = y^*_j$
for $i, j = 1, 2, \dots, n$.
Define the random distributions $X = (X_1, X_2, \dots, X_n)$ and
$Y = (Y_1, Y_2, \dots, Y_n)$, with values in $[0, 1]^n$, by setting:
\[
X_i =
\frac{1}{\kappa(\delta)} \cdot \sum_{t=1}^{\kappa(\delta)} [I_t=i]
\qquad \text{ and } \qquad % \qquad
Y_j =
\frac{1}{\kappa(\delta)} \cdot \sum_{t=1}^{\kappa(\delta)} [J_t=j].
\]
Note that every realization of~$Y$ is a $\kappa(\delta)$-uniform
strategy that uses the pure strategy~$j$, $1 \leq j \leq n$,
with probability $K_j/\kappa(\delta)$, where
$K_j = \sum_{t=1}^{\kappa(\delta)} [J_t=j]$ is the number of
indices~$t$, $1 \leq t \leq \kappa(\delta)$, for which $J_t = j$.
A similar characterization holds for every realization of~$X$.
Observe also that $\supp(X) \subseteq \supp(x^*)$ and 
$\supp(Y) \subseteq \supp(y^*)$ because for all~$i$ and~$j$, 
$1 \leq i, j \leq n$, the random variables~$X_i$ and~$Y_j$ are
identically equal to~$0$ unless $x^*_i > 0$ and $y^*_j > 0$,
respectively. 

Since we want (a realization of) the random strategies~$X$ and~$Y$
to well support a certain pair of values, we now characterize 
$R_{i \bullet} Y$, for all $i \in \supp(x^*)$;
the whole reasoning presented below for $R_{i \bullet} Y$ can be
carried out analogously for~$X^T C_{\bullet j}$,
for all $j = 1, 2, \dots, n$, and hence it is omitted.

First, observe that for all~$i = 1, 2, \dots, n$, we have:
\[
R_{i \bullet} Y
\, = \, \sum_{j=1}^n R_{ij} Y_j
\, = \, \frac{1}{\kappa(\delta)} \cdot \sum_{j=1}^n R_{ij} \cdot
\sum_{t=1}^{\kappa(\delta)} [J_t=j]
\, = \, \frac{1}{\kappa(\delta)} \cdot \sum_{t=1}^{\kappa(\delta)}
R_{i J_t}.
\]
Therefore, the random variable $R_{i \bullet} Y$ is equal to the
arithmetic average
\[
\overline{Z_i} =
\frac{1}{\kappa(\delta)} \cdot \sum_{t=1}^{\kappa(\delta)} Z_{it}
\]
of the independent random variables
$Z_{it} = R_{i J_t}$, $1 \leq t \leq \kappa(\delta)$.

For every $i \in \supp(x^*)$, we will apply Hoeffding's inequality
to the corresponding random variable $\overline{Z_i}$.
Hoeffding's inequality gives an exponential upper bound for the
probability of large deviations of the arithmetic average of
independent and bounded random variables from their expectation.

\begin{lemma}[Hoeffding's inequality]
  Let $Z_1, Z_2, \dots, Z_k$ be independent random variables with
  $0 \leq Z_t \leq 1$ for every $t$, let
  $\overline{Z} = (1/k) \cdot \sum_{t=1}^k Z_t$, and let
  $\Expe{\overline{Z}}$ be its expectation.
  Then for all $\delta > 0$, we have
  $\Prob{\overline{Z}-\Expe{\overline{Z}} \leq -\delta}
  \leq e^{-2 \delta^2 k}$.
\end{lemma}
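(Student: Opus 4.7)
The plan is to finish the probabilistic construction begun in the excerpt by combining Hoeffding's inequality with a weighted union bound over the supports of the Nash equilibrium. Set $v^* = (x^*)^T R y^*$ and $u^* = (x^*)^T C y^*$. First I would invoke the best-response characterization of a Nash equilibrium: every pure strategy $i \in \supp(x^*)$ must itself be a best response to $y^*$, so $R_{i\bullet} y^* = v^*$ (otherwise a strict inequality $R_{i\bullet} y^* < v^*$ combined with $x^*_i > 0$ would contradict $\sum_i x^*_i R_{i\bullet} y^* = v^*$); symmetrically $(x^*)^T C_{\bullet j} = u^*$ for every $j \in \supp(y^*)$. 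Since $\overline{Z_i} = R_{i\bullet} Y$ is the empirical average of $\kappa(\delta)$ i.i.d.\ $[0,1]$-valued variables with common mean $R_{i\bullet} y^* = v^*$, Hoeffding's inequality applied to each $i \in \supp(x^*)$ gives
\[
\Prob{R_{i\bullet} Y \leq v^* - \delta} \,\leq\, e^{-2\delta^2 \kappa(\delta)} \,\leq\, e^{-4\ln(1/\delta)} \,=\, \delta^4,
\]
using $\kappa(\delta) \geq 2\ln(1/\delta)/\delta^2$; the symmetric bound $\Prob{X^T C_{\bullet j} \leq u^* - \delta} \leq \delta^4$ holds for every $j \in \supp(y^*)$.

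The main obstacle is that the well-supports condition requires the lower bound at indices $i \in \supp(X)$, a random subset of $\supp(x^*)$, rather than at all of $\supp(x^*)$: a naive union bound over $\supp(x^*)$ could yield $|\supp(x^*)|\,\delta^4$, which may exceed $1$. I would therefore weight each term by $\Prob{X_i > 0}$. Define the row bad event $A_i = \{X_i > 0\} \cap \{R_{i\bullet} Y \leq v^* - \delta\}$ and the symmetric column bad event $B_j$. Because the samples $(I_t)$ producing $X$ are independent of the samples $(J_t)$ producing $Y$, we have $\Prob{A_i} \leq \Prob{X_i > 0}\cdot\delta^4$, and Bernoulli's inequality yields $\Prob{X_i > 0} = 1 - (1-x^*_i)^{\kappa(\delta)} \leq \kappa(\delta)\,x^*_i$. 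Summing over $i \in \supp(x^*)$ and exploiting $\sum_i x^*_i = 1$ gives
\[
\sum_{i \in \supp(x^*)} \Prob{A_i} \,\leq\, \kappa(\delta)\,\delta^4,
\]
with the same bound for $\sum_{j \in \supp(y^*)} \Prob{B_j}$.

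Combining both, the probability that any bad event occurs is at most $2\kappa(\delta)\,\delta^4 \approx 4\delta^2\ln(1/\delta)$, which an elementary calculation shows is strictly less than $1$ throughout the range of $\delta$ where the theorem has nontrivial content (the function $\delta^2\ln(1/\delta)$ peaks at $1/(2e)$ on $(0,1)$). Hence the probabilistic method furnishes a realization of $(X, Y)$ in which no bad event occurs; since $\supp(X) \subseteq \supp(x^*)$ and $\supp(Y) \subseteq \supp(y^*)$ always, such a realization delivers a pair $(x, y)$ of $\kappa(\delta)$-uniform strategies satisfying $R_{i\bullet} y \geq v^* - \delta$ for every $i \in \supp(x)$ and $x^T C_{\bullet j} \geq u^* - \delta$ for every $j \in \supp(y)$, which is precisely the claim that $(x, y)$ well supports $(v^* - \delta, u^* - \delta)$.
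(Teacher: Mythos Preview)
Your completion is correct and lands on the same conclusion as the paper, but the union bound is organised differently. The paper does not weight individual indices by $\Prob{X_i>0}$; instead it conditions on the whole random set $\supp(X)$. For each fixed $I\subseteq\supp(x^*)$ with $|I|\leq\kappa(\delta)$ it applies the plain union bound over $i\in I$ (using independence of $X$ and $Y$) to obtain $\Prob{R_{i\bullet}Y<v^*-\delta\text{ for some }i\in I}\leq\kappa(\delta)\,e^{-2\delta^2\kappa(\delta)}<1/2$, and then writes the row failure probability as $\sum_I\Prob{\supp(X)=I}\cdot\Prob{\text{failure}\mid\supp(X)=I}<1/2$; symmetrically for columns, and the two halves combine to give total failure probability strictly below $1$. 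Your Bernoulli estimate $\Prob{X_i>0}\leq\kappa(\delta)\,x^*_i$ followed by $\sum_i x^*_i=1$ is a neat alternative that recovers the same factor $\kappa(\delta)$ the paper extracts from $|\supp(X)|\leq\kappa(\delta)$, while avoiding the sum over subsets. One small caveat: your final numerical check rests on $2\kappa(\delta)\delta^4\approx 4\delta^2\ln(1/\delta)$, which ignores the ceiling in $\kappa(\delta)$ and can break for $\delta$ near $1$ (for instance $\delta=0.99$ gives $\kappa(\delta)=1$ and $2\delta^4>1$); the easy repair is to keep the sharper factor $e^{-2\delta^2\kappa(\delta)}$ rather than relaxing it to $\delta^4$, which is exactly what the paper does.
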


Before we apply Hoeffding's inequality to the random
variables~$\overline{Z_i}$ defined above, observe that for every
$t = 1, 2, \dots, \kappa(\delta)$, we have:
\[
\expe{Z_{it}}
\, = \, \expe{R_{i J_t}}
\, = \, \sum_{j=1}^n R_{ij} \cdot \prob{J_t=j}
\, = \, R_{i \bullet} y^*.
\]
Note, however, that if $i \in \supp(x^*)$ then
$\expe{Z_{it}} = R_{i \bullet} y^* = v^*$, because $(x^*, y^*)$ is a
Nash equilibrium, and hence every $i \in \supp(x^*)$ is a best
response to~$y^*$.  
It follows that
$\Expe{\overline{Z_i}} =
(1/\kappa(\delta)) \cdot \sum_{t=1}^{\kappa(\delta)} \expe{Z_{it}} =
v^*$.

Applying Hoeffding's inequality, for every $i \in \supp(x^*)$, we
get:
\begin{equation}
  \label{eq:hoeff-i}
  \prob{R_{i \bullet} Y < v^* - \delta}
  \, = \,
  \Prob{\overline{Z_i} - \Expe{\overline{Z_i}} < -\delta}
  \, \leq \,
  e^{-2 \delta^2 \kappa(\delta)}.
\end{equation}
It follows that if $I \subseteq \supp(x^*)$ and
$|I| \leq \kappa(\delta)$, then:
\begin{multline}
  \label{eq:hoeff-I}
  \Prob{R_{i \bullet} Y < v^* - \delta
    \text{ for some } i \in I}
  \, \leq \, \\
  \leq \,
  \sum_{i \in I}
  \Prob{R_{i \bullet} Y < v^* - \delta}
  \, \leq \, \kappa(\delta) \cdot e^{-2 \delta^2 \kappa(\delta)}
  \, = \, 2 \delta^2 \ln(1/\delta)
  \, < \, \frac{1}{2},
\end{multline}
for all $\delta > 0$.
The first inequality holds by the union bound, and the second
follows from~(\ref{eq:hoeff-i}) and because
$|I| \leq \kappa(\delta)$.
The last inequality can be verified by observing that the
function $f(x) = 2 x^2 \ln(1/x)$, for $x > 0$, achieves its maximum
at $x = 1/\sqrt{e}$ and $f(1/\sqrt{e}) = 1/e < 1/2$.

In a similar way we can prove that if $J \subseteq \supp(y^*)$
and $|J| \leq \kappa(\delta)$, then:
\begin{equation}
  \label{eq:hoeff-J}
  \Prob{X^T C_{\bullet j} < (x^*)^T C y^* - \delta
    \text{ for some } j \in J}
  \, < \, \frac{1}{2},
\end{equation}
for all $\delta > 0$.

We are now ready to argue that
\begin{multline*}
  \Prob{R_{i \bullet} Y \geq v^* - \delta
    \text{ for all } i \in \supp(X), \\
    \text{ and } 
    X^T C_{\bullet j} \geq u^* - \delta
    \text{ for all } j \in \supp(Y)} 
  > 0,
\end{multline*}
and hence there must be realizations $x, y \in [0, 1]^n$ of the
random variables
$X = (X_1, X_2, \dots, X_n)$ and $Y = (Y_1, Y_2, \dots, Y_n)$, such
that~$(x, y)$ well supports 
$\big(v^* - \delta, u^* - \delta\big)$. 
Indeed, we have:
\begin{multline*}
  \Prob{R_{i \bullet} Y < v^* - \delta
    \text{ for some } i \in \supp(X), \\
    \text{ or } 
    X^T C_{\bullet j} < u^* - \delta
    \text{ for some } j \in \supp(Y)} \\
  \leq \,
  \sum_{I \subseteq \supp(x^*)}
  \Prob{I = \supp(X) \text{ and }
    R_{i \bullet} Y < v^* - \delta
    \text{ for some } i \in I} \\
  + \sum_{J \subseteq \supp(y^*)}
  \Prob{J = \supp(Y) \text{ and }
    X^T C_{\bullet j} < u^* - \delta
    \text{ for some } j \in J} \\
  = \,
  \sum_{\substack{I \subseteq \supp(x^*) \\
      |I| \leq \kappa(\delta)}}
  \Prob{I = \supp(X)} \cdot 
  \Prob{R_{i \bullet} Y < v^* - \delta
    \text{ for some } i \in I \; \big| \; 
    I = \supp(X)} \\
  + \sum_{\substack{J \subseteq \supp(y^*) \\
      |J| \leq \kappa(\delta)}}
  \Prob{J = \supp(Y)} \cdot 
  \Prob{X^T C_{\bullet j} < u^* - \delta
    \text{ for some } j \in J \; \big| \; 
    J = \supp(Y)} \\
  < \,
  \sum_{I \subseteq \supp(x^*)}
%   \sum_{\substack{I \subseteq \supp(x^*) \\
%       |I| \leq \kappa(\delta)}}
  \Prob{I = \supp(X)} \cdot \frac{1}{2}
  + \sum_{J \subseteq \supp(y^*)}
%   + \sum_{\substack{J \subseteq \supp(y^*) \\
%       |J| \leq \kappa(\delta)}}
  \Prob{J = \supp(Y)} \cdot \frac{1}{2}
  \, = \, 1,
\end{multline*}
where the first inequality follows from the union bound, and 
from~$\supp(X) \subseteq \supp(x^*)$  
and~$\supp(Y) \subseteq \supp(y^*)$; 
the equality holds because~$|\supp(X)| \leq \kappa(\delta)$ and  
$|\supp(Y)| \leq \kappa(\delta)$
by the definitions of~$X$ and~$Y$;  
and the latter (strict) inequality follows from~(\ref{eq:hoeff-I})
and~(\ref{eq:hoeff-J}).

\subsection*{Acknowledgements}

We thank the anonymous SAGT reviewers for detailed feedback that
helped us improve the presentation of our results.

\bibliographystyle{plain}
\bibliography{agt}

\begin{thebibliography}{10}

\bibitem{Alt94}
I.~Althofer.
\newblock On sparse approximations to randomized strategies and convex
  combinations.
\newblock {\em Linear Algebra and Its Applications}, 199:339--355, 1994.

\bibitem{BBM10}
H.~Bosse, J.~Byrka, and E.~Markakis.
\newblock New algorithms for approximate {N}ash equilibria in bimatrix games.
\newblock {\em Theoretical Computer Science}, 411(1):164--173, 2010.

\bibitem{CDT09}
X.~Chen, X.~Deng, and S.-H. Teng.
\newblock Settling the complexity of computing two-player {N}ash equilibria.
\newblock {\em Journal of the ACM}, 56(3), 2009.

\bibitem{DGP09}
C.~Daskalakis, P.~W. Goldberg, and Ch.~H. Papadimitriou.
\newblock The complexity of computing a {N}ash equilibrium.
\newblock {\em SIAM Journal on Computing}, 39(1):195--259, 2009.

\bibitem{DMP07}
C.~Daskalakis, A.~Mehta, and Ch. Papadimitriou.
\newblock Progress in approximate {N}ash equilibria.
\newblock In {\em Proceedings of the 8th ACM Conference on Electronic Commerce
  (EC)}, pages 355--358, 2007.

\bibitem{DMP09}
C.~Daskalakis, A.~Mehta, and Ch. Papadimitriou.
\newblock A note on approximate {N}ash equilibria.
\newblock {\em Theoretical Computer Science}, 410:1581--1588, 2009.

\bibitem{FGSS12}
J.~Fearnley, P.~W. Goldberg, R.~Savani, and T.~B. S{\o}rensen.
\newblock Approximate well-supported {N}ash equilibrium below two-thirds.
\newblock In {\em Proceedings of the 5th Symposium on Algorithmic Game Theory
  (SAGT)}, pages 108--119, 2012.

\bibitem{GKT50}
D.~Gale, H.~W. Kuhn, and A.~W. Tucker.
\newblock On symmetric games.
\newblock In {\em Contributions to the Theory of Games}, volume~I, pages
  81--87. Princeton University Press, 1950.

\bibitem{KS10}
S.~C. Kontogiannis and P.~G. Spirakis.
\newblock Well supported approximate equilibria in bimatrix games.
\newblock {\em Algorithmica}, 57(4):653--667, 2010.

\bibitem{KS11}
S.~C. Kontogiannis and P.~G. Spirakis.
\newblock Approximability of symmetric bimatrix games and related experiments.
\newblock In {\em Proceedings of the 10th International Symposium on
  Experimental Algorithms (SEA)}, pages 1--20, 2011.

\bibitem{LMM03}
R.~Lipton, E.~Markakis, and A.~Mehta.
\newblock Playing large games using simple strategies.
\newblock In {\em Proceedings of the 4th ACM Conference on Electronic Commerce
  (EC)}, pages 36--41, 2003.

\bibitem{N51}
J.~Nash.
\newblock Non-cooperative games.
\newblock {\em Annals of Mathematics}, 54(2):286--295, 1951.

\bibitem{TS08}
H.~Tsaknakis and P.~G. Spirakis.
\newblock An optimization approach for approximate {N}ash equilibria.
\newblock {\em Internet Mathematics}, 5(4):365--382, 2008.

\end{thebibliography}

\end{document}